\newtheorem{theorem}{Theorem}
\newtheorem{lemma}[theorem]{Lemma}
\newtheorem{claim}[theorem]{Claim}
\def\@endtheorem{\endtrivlist}
\newcounter{rrule}
\newenvironment{rrule}{\refstepcounter{rrule}\par\smallskip\noindent
\textbf{(\arabic{rrule})}\quad}{}
\newcommand{\currentrule}{\arabic{rrule}}
\newcommand{\row}[1]{\{#1\}\times [1,m]}
\newcommand{\col}[1]{[1,n] \times \{#1\}}
\newcommand{\vcut}[3]{[#1,#2] \times \{#3\}}
\newcommand{\hcut}[3]{\{#1\} \times [#2,#3]}
\begin{document}

\title{An FPT algorithm for orthogonal buttons and scissors}
\author{Dekel Tsur%
\thanks{Ben-Gurion University of the Negev.
Email: \texttt{dekelts@cs.bgu.ac.il}}}
\date{}
\maketitle

\begin{abstract}
We study the puzzle game Buttons and Scissors in which
the goal is to remove all buttons from an $n\times m$ grid by a series of
horizontal and vertical cuts.
We show that the corresponding parameterized problem has an algorithm with time
complexity $2^{O(k^2 \log k)} (n+m)^{O(1)}$,
where $k$ is an upper bound on the number of cuts.
\end{abstract}

\paragraph{Keywords} Combinatorial puzzles;
Parameterized complexity.

\section{Introduction}
In the Buttons and Scissor puzzle one is given an $n \times m$ grid,
where some of the cells of the grid contain \emph{buttons}.
Each button has a color.
The goal is to remove all buttons from the grid by applying \emph{cuts}.
A cut is a sequence $C = ((i_1,j_1),\ldots,(i_p,j_p))$ of grid cells with the
following properties.
\begin{enumerate}
\item
The cell $(i_1,j_1)$ and $(i_p,j_p)$ contain buttons.
\item
All the buttons in the cells of $C$ have the same color.
\item
$C$ has one of the following forms.
\begin{enumerate}
\item
$i_1 = \cdots = i_p$ and $j_{l+1} = j_l+1$ for all $l < p$.
\item
$j_1 = \cdots = j_p$ and $i_{l+1} = i_l+1$ for all $l < p$.
\item
$i_{l+1} = i_l+1$ and $j_{l+1} = j_l+1$ for all $l < p$.
\end{enumerate}
\end{enumerate}
The cuts of the first, second, and third form above are called
\emph{horizontal cuts}, \emph{vertical cuts}, and \emph{diagonal cuts},
respectively.
The \emph{application} of a cut $C$ deletes all the buttons in the cells of $C$.
A horizontal cut $C = ((i,j_1),\ldots,(i,j_p))$ will also be denoted
$\hcut{i}{j_1}{j_p}$ and
a vertical cut $C = ((i_1,j),\ldots,(i_p,j))$ will also be denoted
$\vcut{i_1}{i_p}{j}$.
See Figure~\ref{fig:example} for an example.

\begin{figure}
\centering
\includegraphics[scale=1.2]{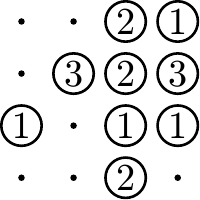}
\caption{An example of the Buttons and Scissor puzzle. In this example,
the buttons can be removed by applying four cuts:
a horizontal cut $\hcut{3}{1}{3}$, a vertical cut $\vcut{1}{4}{3}$,
a horizontal cut $\hcut{2}{2}{4}$, and a verticul cut
$\vcut{1}{3}{4}$.\label{fig:example}}
\end{figure}

The Buttons and Scissors puzzle can be formulated as a decision problem
as follows.
The input is an $n\times m$ matrix $B$ representing the buttons and an integer
$k$, and the goal is to decide whether all buttons can be removed using at most
$k$ cuts.
The matrix $B$ represents the buttons as follows.
If $B[i,j] = 0$ then there is no button at cell $(i,j)$.
If $B[i,j] = c > 0$ then there is a button with color $c$ at cell $(i,j)$.
The \emph{orthogonal buttons and scissor problem} is a variant of
the buttons and scissors problem in which only vertical and horizontal
cuts are allowed.

Both the buttons and scissors problem and the orthogonal button and scissor
problem are NP-hard~\cite{gregg2015buttons}.
Some variants of this problem were studied in~\cite{burke2015single}.

Agrawal et al.~\cite{agrawal2016kernelizing} claimed to give an FTP
algorithm for the orthogonal buttons and scissors problem.
However, there is an error in their algorithm.
In~\cite{agrawal2016kernelizing}, a row block is defined to be a maximal
interval $[a,b]$ of rows of $B$ such that all the non-zeros rows in the block
are identical.
It is claimed that if the number of row blocks is at least $2k+1$ then
$(B,k)$ is a no instance.
However, this is not true. Suppose that $B$ is an $n\times 2$ matrix defined
as follows.
For odd $i$, $B[i,1] = 1$ and $B[i,2] = 0$.
For even $i$, $B[i,1] = 0$ and $B[i,2] = 2$.
The number of row blocks is $n/2$ while all the buttons in $B$ can be removed
using two vertical cuts.

In this paper we give an algorithm for the orthogonal buttons and scissors
problem whose time complexity is $2^{O(k^2 \log k)} (n+m)^{O(1)}$.

\section{The algorithm}
Let $(B,k)$ be an instance of the orthogonal buttons and scissors problem.
A \emph{solution} for $(B,k)$ is a sequence of at most $k$ cuts whose
application remove all the buttons in $B$.
We say that a cut $C$ is \emph{contained} in row $i$ (resp., column $j$)
if $C \subseteq \row{i}$ (resp., $C \subseteq \col{i}$).
A cut $C$ \emph{touches} row $i$ if $C \cap (\row{i}) \neq \emptyset$.
A row or column of $B$ is called \emph{heavy} if it contains at least $k+1$
buttons, and \emph{light} otherwise.
A button is \emph{heavy} if it is in some heavy row or column.
Otherwise, the button is \emph{light}.

For a set of rows $X$ and a set of column $Y$, $B[X,Y]$ is a sub-matrix of $B$
containing the elements $B[x,y]$ for $x\in X$ and $y\in Y$.
A \emph{row block} is a set $X$ of consecutive row indices such that
for every column $j$, all the buttons in $B[X,\{j\}]$ have the same color.
For a row block $X$, let $J_X$ be a set containing every column index $j$
such that $B[X,\{j\}]$ contains at least one button,
and let $J'_X$ be a set containing every column index $j$ such that
$B[X,\{j\}]$ contains between $1$ and $k+1$ buttons.

We now present our algorithm.
The algorithm first repeatedly applies the reduction rules given below.
When no reduction rule is applicable, the algorithm solves the reduced instance.

\begin{rrule}
If there are at more than $k$ heavy rows or more than $k$ heavy columns,
return no.\label{rule:heavy-rows}
\end{rrule}
\begin{lemma}
Rule~(\currentrule) is safe.
\end{lemma}
\begin{proof}
Suppose that $(B,k)$ is yes instance and let $\mathcal{S}$ be a solution for
$(B,k)$.
If no cut of $\mathcal{S}$ is contained in row~$i$ then
row~$i$ is light since a cut in $\mathcal{S}$ can remove at most one button
from row~$i$ of $B$.
Therefore, the number of heavy rows is at most $|\mathcal{S}| \leq k$.
Similarly, the number of heavy rows is at most $k$.
Therefore, Rule~(\currentrule) is safe.
\end{proof}

\begin{rrule}
If there are more than $k^2$ light buttons, return no.\label{rule:light-buttons}
\end{rrule}
\begin{lemma}
Rule~(\currentrule) is safe.
\end{lemma}
\begin{proof}
Suppose that $(B,k)$ is yes instance and let $\mathcal{S}$ be a solution for
$(B,k)$.
Let $C \in \mathcal{S}$ be a horizontal cut that is contained in row $i$.
If row $i$ is heavy then $C$ does not remove light buttons.
Otherwise, $C$ removes at most $k$ light buttons.
Similarly, a vertical cut $C \in \mathcal{S}$ removes at most $k$ light buttons.
Since every button must be removed by a cut in $\mathcal{S}$, it follows that
there are at most $k^2$ light buttons.
\end{proof}

\begin{rrule}
If row~$i$ (resp., column~$j$) does not contain buttons, delete row~$i$
(resp., column~$j$) from $B$.\label{rule:empty}
\end{rrule}

The following reduction rule is adapted from Rule~2
in~\cite{agrawal2016kernelizing}.
For completeness, we give a full proof of the safeness of the rule.

\begin{rrule}\label{rule:row-block}
Suppose that there is row block $X = [a,b]$ and an index $i_0 \in X$ such that
for every column $j$, either (1) $B[X,\{j\}]$ does not contain buttons,
or (2) there are at least $k$ buttons in $B[[a,i_0-1],\{j\}]$ and at least
$k$ buttons in $B[[i_0+1,b],\{j\}]$.
Return the instance $(B',k)$, where $B$ is a matrix obtained from $B$
by deleting all the buttons in row $i_0$ of $B$.
\end{rrule}

\begin{lemma}
Rule~(\currentrule) is safe.
\end{lemma}
\begin{proof}
We need to show that $(B,k)$ is a yes instance if and only if $(B',k)$
is a yes instance.
To prove the first direction we use following claim.

\begin{claim}\label{clm:S}
If $(B,k)$ is a yes instance, there is a solution $\mathcal{S}$ for $(B,k)$
such that no cut of $\mathcal{S}$ is contained in a row of $X$.
Additionally, there is no vertical cut in $\mathcal{S}$ with endpoint
$(i_0,j)$ for some column $j$.
\end{claim}
\begin{proof}
Suppose that $(B,k)$ is yes instance, and let $\mathcal{S}_0$ be a solution for
$(B,k)$.
We construct a new solution $\mathcal{S}$ by generating a cut $C'$ for every
cut $C$ in $\mathcal{S}_0$.
If $C$ is cut in $\mathcal{S}_0$ that does not touch any row in $X$ then
the corresponding cut in $\mathcal{S}$ is $C' = C$.
If $C$ is a horizontal cut in $\mathcal{S}_0$ that is contained in a row in $X$
then the corresponding cut in $\mathcal{S}$ is $C' = \emptyset$.

Let $j$ be some column.
If $B[X,\{j\}]$ does not contain buttons then for every vertical
cut $C$ in $\mathcal{S}_0$ such that $C$ is contained in column $j$ and $C$ touches at least one
row of $X$,
the corresponding cut in $\mathcal{S}$ is $C' = C$.

Now suppose that $B[X,\{j\}]$ contains buttons.
Due to the condition of Rule~(\currentrule),
$B[X,\{j\}]$ contains at least $2k$ buttons.
Therefore, there is an index $i' \in X$ such that $B[i',j] \neq 0$ and there is
no cut in $\mathcal{S}_0$ that is contained in row $i'$.
Therefore, the cut in $\mathcal{S}_0$ that removes the button $B[i',j]$ is
a vertical cut.

Let $C^* = \vcut{a^*}{b^*}{j}$ be the last vertical cut in $\mathcal{S}_0$
such that the application of $C^*$ (after the application of the preceding
cuts in $\mathcal{S}_0$) removes at least one button in $B[X,\{j\}]$.
Note that $C^*$ exists since $B[i',j]$ is removed by a vertical cut.
If $C$ is a vertical cut in $\mathcal{S}_0$ that is contained in column $j$,
touches at least one row of $X$, and appears after $C^*$ in $\mathcal{S}_0$,
then the corresponding cut in $\mathcal{S}$ is $C' = C$.
Now, let $C = \vcut{a'}{b'}{j}$ be a vertical cut in $\mathcal{S}_0$ that
touches at least one row of $X$ and appears before $C^*$ in $\mathcal{S}_0$.
The interval $[a',b']$ cannot contain the interval $[a,b]$ otherwise we get a
contradiction to the definition of $C^*$.
Therefore, there are three possible cases.
\begin{enumerate}
\item
$a' \geq a$ and $b' \leq b$, and at least one inequality is strict.
\item
$a' < a$ and $a \leq b' < b$.
\item
$a < a' \leq b$ and $b' > b$.
\end{enumerate}
If $C$ is a cut in $\mathcal{S}_0$ of the first type above
then the corresponding cut in $\mathcal{S}$ is $C' = \emptyset$.
Now, let $C_1,\ldots,C_s$ be the cuts in $\mathcal{S}_0$ of the second type,
according to their order in $\mathcal{S}_0$ (note that $s$ can be zero).
Denote $C_l = \vcut{a_l}{b_l}{j}$.
Since the application of $C_l$ removes all buttons in $B[[a_l,b_l],\{j\}]$,
we have that $a \leq b_1 < b_2 < \cdots < b_s$.
Let $i_1 < \cdots < i_k$ be the row indices of the top $k$ buttons in
$B[X,\{j\}]$.
For each cut $C_l$, the corresponding cut in $\mathcal{S}$ is
$C'_l = \vcut{a_l}{i_l}{j}$.
The cuts of the third type are handled analogously:
Let $\hat{C}_1,\ldots,\hat{C}_t$ be the cuts $\mathcal{S}_0$ of the third type,
where $\hat{C}_l = \vcut{\hat{a}_l}{\hat{b}_l}{j}$.
Let $\hat{\imath}_1 > \cdots > \hat{\imath}_k$ be the row indices of the bottom
$k$ buttons in $B[X,\{j\}]$.
For each cut $\hat{C}_l$, the corresponding cut in $\mathcal{S}$ is
$\hat{C}'_l = \vcut{\hat{\imath}_l}{\hat{b}_l}{j}$.
Finally, the cut that corresponds to $C^*$ is
$\vcut{\min(a^*,i_s)}{\max(b^*,\hat{\imath}_t)}{j}$.

It is easy to show by induction that the cuts in $\mathcal{S}$ are valid cuts
(namely, every cut contains buttons of the same color, and the endpoints of the
cut contains buttons) and that the application of the cuts of $\mathcal{S}$
removes all the buttons of $B$.
Therefore, $S$ is a solution for $(B,k)$.
\end{proof}

Suppose that $(B,k)$ is a yes instance.
Let $\mathcal{S}$ be the solution of Claim~\ref{clm:S} for $(B,k)$.
Since there is no cut in $\mathcal{S}$ with endpoint $(i_0,j)$ for some $j$,
$\mathcal{S}$ is also a solution for $(B',k)$. Therefore, $(B',k)$ is a yes
instance.

The proof of the opposite direction is similar.
Suppose that $(B',k)$ is a yes instance.
Using the same arguments used to prove Claim~\ref{clm:S}, we have that
there is a solution $\mathcal{S}$ for $(B',k)$
such that no cut of $\mathcal{S}$ is contained in a row of $X$.
We now generate a solution $\mathcal{S}_2$ for $(B,k)$
by generating a cut $C'$ for every cut $C$ in $\mathcal{S}$.
Let $j$ be a column of $B$ such that $B[i_0,j] \neq 0$.
Let $i_1$ be the minimum index such that $i_1 > i_0$ and $B[i_1,j] \neq 0$.
Let $C_j = \vcut{a_j}{b_j}{j}$ be the cut in $\mathcal{S}$ whose application
removes the button $B[i_1,j]$.
If $a_j = i_1$ then the corresponding cut in
$\mathcal{S}_2$ is $\vcut{i_0}{b_j}{j}$.
Otherwise, the corresponding cut in $\mathcal{S}_2$ is $C'_j = C_j$.
For a cut $C$ in $\mathcal{S}$ that is not one of the cuts $C_j$ above,
the corresponding cut in $\mathcal{S}_2$ is $C' = C$.
It is easy to verify that $\mathcal{S}_2$ is a solution for $(B,k)$.
Therefore, $(B,k)$ is a yes instance.
\end{proof}

\begin{rrule}
If the number of rows is more than $(4k^2+1)(k+1)k (4k+6)^k$
return no.\label{rule:num-rows}
\end{rrule}
\begin{lemma}
Rule~(\currentrule) is safe.
\end{lemma}
\begin{proof}
Suppose that $(B,k)$ is yes instance and let $\mathcal{S}$ be a solution for
$(B,k)$.

We first show that the rows of $B$ can be partitioned into at most $4k^2+1$
row blocks.
Define $I_0$ to be a set containing $i$ and $i+1$ for every heavy row $i$.
Let $j$ be some column. If $j$ is a light column, define
$I_j = \{i,i+1 \colon B[i,j] \neq 0\}$.
If $j$ is a heavy column, let $I_j$ be the set of all indices $i$ such that
$B[i,j] \neq 0$ and $B[i,j] \neq B[i',j]$, where $i'$ is the maximum index
such that $i' < i$ and $B[i',j] \neq 0$ (if the index $i'$ does not exist,
the index $i$ is not in $I_j$).
Let $I = \bigcup_{j=0}^m I_j$.
We partition $[1,n]$ into row blocks according to $I$ as follows.
Suppose that $I = \{i_1,\ldots i_p\}$ where $i_1 < i_2 < \cdots < i_p$ and
denote $i_{p+1} = n$.
The row blocks are $[i_l,i_{l+1}-1]$ for every $l \leq p$.
Additionally, if $i_1 > 1$ we define the row block $[1, i_1-1]$.
It is clear from the definition of $I$ that each generated interval is
a row block.

We now give an upper bound on $|I|$.
The heavy rows generate at most $2k$ elements in $I$
(since there are at most $k$ heavy rows).
We next consider the contribution of the light columns to $I$.
Since we already bounded the number of indices generated by the heavy rows,
we only need to consider the light buttons in the light columns.
These buttons generate at most $2k^2$ elements in $I$
(since there are at most $k^2$ light buttons).

We now consider some heavy column $j$.
Suppose we remove buttons from $B$ according to the cuts of $\mathcal{S}$,
and after each removal we update $I_j$.
Each cut $C$ in $\mathcal{S}$ (except the last) can decrease the size of $I_j$
by at most 2:
The cut $C$ can remove all the buttons in $B[[i_l,i_{l+1}-1],\{j\}]$ for some
specific $l$. This causes the removal of $i_l$ from $I_j$.
Additionally, if $B[i_{l-1},j] = B[i_{l+1},j]$ then $i_{l+1}$ will also be
removed from $I_j$.
After all the cuts of $\mathcal{S}$ are applied, $I_j = \emptyset$.
It follows that before applying the cuts, $|I_j| \leq 2(k-1)$.
Therefore, column $j$ generates at most $2k$ elements in $I$.
Since there are at most $k$ heavy columns, we conclude that
$|I| \leq 2k+2k^2+2(k-1)k = 4k^2$.
Therefore, the number of generated row blocks is at most $4k^2+1$.

We now partition the row blocks generated above to sub-blocks.
We perform a recursive procedure on every block $X_0$ generated above whose
size is at least 2.
Note that for such block, $B[X_0,[1,m]]$ contains buttons only in heavy columns
and therefore $|J_{X_0}| \leq k$.
Denote by $X = [a,b]$ the current block given to the recursive procedure.
The recursion stops when $J'_X = J_X$.
Note that in this case, $|X| \leq (k+1)k$ since every row of $X$ contains
at least one button (otherwise Rule~(\ref{rule:empty}) can be applied)
and by the definition of $J'_X$, the number of buttons in
$B[X,[1,m]]$ is at most $(k+1)|J_X| \leq (k+1)k$
($|J_X| \leq k$ since every column $j \in J_X$ is heavy).

Now suppose that $J'_X \neq J_X$.
We consider two cases.
In Case~1, $J'_X \neq \emptyset$.
In this case, arbitrarily pick $j \in J'_X$.
Let $i_1 < i_2 < \cdots < i_p$ be the row numbers of the buttons
in $B[X,\{j\}]$, and define $i_0 = 0$ and $i_{p+1} = n+1$.
We partition $X$ into sub-blocks $\{i_l\}$ for every $1 \leq l \leq p$,
and $[i_l+1,i_{l+1}-1]$ for every $0 \leq l \leq p$ such that $i_{l+1}>i_l+1$.
We then continue recursively on each of these sub-blocks.
Note that the number of sub-blocks generated from $X$ is at most $2k+3$.
Moreover, for every generated sub-block $X'$ of size at least 2 we have
$j \notin J_{X'}$ and therefore $|J_{X'}| \leq |J_X|-1$.
Since $|J_{X_0}| \leq k$ for the initial block $X_0$,
it follows that the number of times Case~1 can occur during a chain of
recursive calls is at most $k$.

In Case~2, $J'_X = \emptyset$.
For every $j \in J_X$, let $i_j$ be the row number of the $(k+1)$-th
button from the top in $B[X,\{j\}]$.
Let $i = \max\{i_j \colon j \in J_X\}$ and let $j^*$ be the column such that
$i = i_{j^*}$.
We create two sub-blocks $X_1 = [a,i]$ and $X_2 = [i+1,b]$ and continue
recursively on each of these sub-blocks.
Note that $j^* \in J'_{X_1}$ and in particular $J'_{X_1} \neq \emptyset$.
Additionally, there is at least one column $j \in J_X$ such that the number of
buttons in $B[X_2,\{j\}]$ is less than $k$, since otherwise
Rule~(\ref{rule:row-block}) can be applied on $X$ with $i_0 = i$.
Therefore, $J'_{X_2} \neq \emptyset$.
It follows that the number of times Case~2 can occur during a chain of
recursive calls is bounded by the number of times Case~1 can occur.

From the above, we have that the number of sub-blocks generated from some block
$X_0$ is at most $(4k+6)^k$.
Therefore, the number of rows in $B$ is at most
$(4k^{2}+1)(4k+6)^k (k+1)k$.
It follows that Rule~(\currentrule) is safe.
\end{proof}

\refstepcounter{rrule}
\label{rule:column-block}
\refstepcounter{rrule}
\label{rule:num-columns}

We also use additional reduction rules,
Rule~(\ref{rule:column-block}) and Rule~(\ref{rule:num-columns}),
that are analogous to
Rule~(\ref{rule:row-block}) and Rule~(\ref{rule:num-rows}), respectively.
If these rules cannot be applied, the number of columns in $B$ is at most
$(4k^2+1)(k+1)k (4k+6)^k$.

\begin{rrule}
If the number of buttons is at least $k \cdot \max(n,m)+1$,
return no.\label{rule:num-buttons}
\end{rrule}

The safeness of Rule~(\currentrule) is proved in~\cite{agrawal2016kernelizing}.
The following lemma is also proved in~\cite{agrawal2016kernelizing}.
\begin{lemma}\label{lem:alg}
Let $(B,k)$ be an instance of orthogonal buttons and scissors such that $B$
contains $l$ buttons.
Then, the instance can be solved in $l^{2k} (n+m)^{O(1)}$ time.
\end{lemma}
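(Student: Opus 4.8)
\medskip
\noindent\textbf{Proof plan.}
The plan is to solve the instance by a brute-force search over sequences of cuts, exploiting the fact that the cells that can serve as endpoints of a cut in any solution are exactly the cells that already contain a button in $B$. Indeed, applying a cut only removes buttons and never creates new ones, so whenever a cut is applied its two endpoint cells contain buttons that are present in the original matrix $B$; moreover, an orthogonal cut is determined, as a set of cells, by its two endpoints (a horizontal cut with endpoints $(i,j_1)$ and $(i,j_2)$, $j_1\le j_2$, consists exactly of the cells $(i,j)$ with $j_1\le j\le j_2$, and symmetrically for vertical cuts). Hence one can compute, in $(n+m)^{O(1)}$ time, a set $\mathcal{C}$ of at most $l^{2}$ \emph{candidate cuts} such that every cut occurring in any solution of $(B,k)$ lies in $\mathcal{C}$.

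With $\mathcal{C}$ in hand, the algorithm proceeds as follows. First dispose of the trivial cases: if $l\le 1$ the instance is a yes instance, and if $k\ge l$ it is also a yes instance (remove each button by its own single-cell cut); so we may assume $2\le l$ and $k<l\le nm$, and in particular $k=(n+m)^{O(1)}$. Then enumerate every sequence $(C_1,\ldots,C_t)$ with $0\le t\le k$ and $C_r\in\mathcal{C}$ for all $r$. For each such sequence, simulate the applications of $C_1,\ldots,C_t$ in this order starting from $B$, checking at each step that the cut being applied is valid for the current matrix (its two endpoint cells still contain buttons, and all buttons currently lying in its cells share a single color), and finally checking that no button remains; answer yes iff some enumerated sequence passes all of these checks. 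Correctness is immediate in both directions: an accepted sequence is by construction a valid solution using at most $k$ cuts, and conversely, if $\mathcal{S}$ is a solution then every cut of $\mathcal{S}$ lies in $\mathcal{C}$, so $\mathcal{S}$, taken in its own order, is among the enumerated sequences and passes every check.

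For the running time, the number of enumerated sequences is $\sum_{t=0}^{k}|\mathcal{C}|^{t}\le 2\,l^{2k}$, using $|\mathcal{C}|\le l^{2}$ and $l\ge 2$; and each sequence is processed in $(n+m)^{O(1)}$ time (maintain the current matrix and, for each applied cut, scan its at most $\max(n,m)$ cells, which costs $O(k\cdot nm)$ in total since $k\le nm$). Hence the total running time is $l^{2k}(n+m)^{O(1)}$. I do not expect a substantial obstacle here; the only points that need care are the reduction to candidate cuts whose endpoints are original buttons (so that $|\mathcal{C}|\le l^{2}$) and the accounting showing that enumerating the \emph{order} of the cuts and allowing fewer than $k$ of them costs only a constant factor, rather than a $k!$ or $2^{\Theta(k)}$ factor, on top of $l^{2k}$.
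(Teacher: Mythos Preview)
The paper does not supply its own proof of this lemma; it simply cites~\cite{agrawal2016kernelizing}. Your brute-force argument is correct and is precisely the natural approach one expects that reference to contain: since applying cuts never creates buttons, the two endpoint cells of every cut appearing in any solution lie among the $l$ original button cells, and an orthogonal cut is determined (as a set of cells) by its two endpoints, so there are at most $l^{2}$ candidate cuts; one then enumerates the at most $\sum_{t\le k}l^{2t}\le 2l^{2k}$ ordered sequences and simulates each in polynomial time.

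One minor nitpick on your handling of the trivial cases: ``$l\le 1$ implies yes'' fails when $l=1$ and $k=0$, and ``$k\ge l$ implies yes via single-cell cuts'' presumes cuts with $p=1$ are allowed (the paper's definition technically permits this, though the standard puzzle does not). A cleaner way to obtain $k=(n+m)^{O(1)}$, which works regardless of whether single-cell cuts are permitted, is to observe that every valid cut removes at least one button (its first endpoint), so any solution has length at most $l$ and one may replace $k$ by $\min(k,l)\le nm$ without changing the answer or weakening the stated bound $l^{2k}(n+m)^{O(1)}$.
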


Let $(B,k)$ be an instance on which the above reduction rules cannot be applied.
Since Rule~(\ref{rule:num-rows}), Rule~(\ref{rule:num-columns}),
and Rule~(\ref{rule:num-buttons}) cannot be applied,
the number of buttons in $B$ is $2^{O(k\log k)}$.
By Lemma~\ref{lem:alg}, the instance $(B,k)$ can be solved
in $2^{O(k^2 \log k)} (n+m)^{O(1)}$ time.

\bibliographystyle{abbrv}
\bibliography{buttons}

\end{document}